\newtheorem{invariant}{Invariant}
\newtheorem{definition}{Definition}
\newtheorem{theorem}{Theorem}
\newtheorem{lemma}{Lemma}
\newtheorem{example}{Example}
\newtheorem{corollary}{Corollary}
\newcommand{\remove}[1]{}
\newcommand{\CM}[4]{
    \begin{tikzpicture}[baseline=3ex]
        \draw[step=0.5cm,color=gray] (0,0) grid (1,1);
        \node at (0.25,0.25) {#3};
        \node at (0.25,0.75) {#1};
        \node at (0.75,0.75) {#2};
        \node at (0.75,0.25) {#4};
    \end{tikzpicture}
}
\begin{document}

\title{Intermediate Value Linearizability: A Quantitative Correctness Criterion} 

\author{Arik Rinberg
\\ ArikRinberg@campus.technion.ac.il
\\
\\ Idit Keider
\\ idish@ee.technion.ac.il
\\
\\Technion - Israel Institute of Technology}

\date{}

\maketitle

\begin{abstract}
    
Big data processing systems often employ batched updates and data sketches to
estimate certain properties of large data. For example, a \emph{CountMin sketch}
approximates the frequencies at which elements occur in a data stream, and a
\emph{batched counter} counts events in batches. This paper focuses on correctness
criteria for concurrent implementations of such objects. Specifically, we consider
\emph{quantitative} objects, whose return values are from a totally ordered domain,
with a particular emphasis on \emph{$(\epsilon,\delta)$-bounded} objects that
estimate a numerical quantity with an error of at most $\epsilon$ with probability
at least $1 - \delta$.

The de facto correctness criterion for concurrent objects is linearizability.
Intuitively, under linearizability, when a read overlaps an update, it must
return the object's value either before the update or after it. Consider,
for example, a single batched increment operation that counts three new events,
bumping a batched counter's value from $7$ to $10$. In a linearizable implementation of the counter,
a read overlapping this update must return either $7$ or $10$. We observe, however, that in typical
use cases, any \emph{intermediate} value between $7$ and $10$ would also be
acceptable. To capture this additional degree of freedom, we propose
\emph{Intermediate Value Linearizability (IVL)}, a new correctness
criterion that relaxes linearizability to allow returning
intermediate values, for instance $8$ in the example above. Roughly
speaking, IVL allows reads to return any value that is bounded
between two return values that are legal under linearizability. 
A key feature of IVL is that we can prove that concurrent IVL implementations of
$(\epsilon,\delta)$-bounded objects are themselves $(\epsilon,\delta)$-bounded.
To illustrate the power of this result, we give a straightforward
and efficient concurrent implementation of an $(\epsilon, \delta)$-bounded
CountMin sketch, which is IVL (albeit not linearizable). 

Finally, we show that IVL allows for inherently cheaper implementations
than linearizable ones. In particular, we show a lower bound of $\Omega(n)$
on the step complexity of the update operation of any wait-free linearizable
batched counter, and propose a wait-free IVL implementation of the same
object with an $O(1)$ step complexity for update.
\end{abstract}

\section{Introduction}
\label{sec:intro}

\subsection{Motivation}
\label{ssec:motivation}

Big data processing systems often perform analytics on incoming data streams,
and must do so at a high rate due to the speed of incoming data.
Data sketching algorithms, or \emph{sketches} for short~\cite{cormode2012synopses}, are
an indispensable tool for such high-speed computations. Sketches typically estimate some function
of a large stream, for example, the frequency of certain items~\cite{cormode2005improved}, how many unique items
have appeared~\cite{datar2002comparing, flajolet1983probabilistic, gibbons2001estimating},
or the top-$k$ most common items~\cite{metwally2005efficient}.
They are supported by many data analytics platforms such as PowerDrill~\cite{heule2013hyperloglog},
Druid~\cite{druid}, Hillview~\cite{hillview}, and Presto~\cite{presto} as well as standalone toolkits~\cite{apache-datasketches}.

Sketches are quantitative objects that support {\sc update}
and {\sc query} operations, where the return value of a {\sc query}
is from a totally ordered set. They are essentially succinct (sublinear)
summaries of a data stream. For example, a sketch might estimate the number of packets originating
from any IP address, without storing a record for every packet.
Typical sketches are \emph{probably approximately correct (PAC)}, estimating some aggregate
quantity with an error of at most $\epsilon$ with probability at least $1-\delta$ for some
parameters $\epsilon$ and $\delta$.


The ever increasing rates of incoming data create a strong demand for parallel
stream processing~\cite{cormode2011algorithms,heule2013hyperloglog}.
In order to allow queries to return fresh results in real-time without
hampering data ingestion, it is paramount to support queries concurrently with updates~\cite{rinberg2019fast,stylianopoulos2020delegation}.
But parallelizing sketches raises some important questions, for instance: \textit{What are the semantics of overlapping operations in a concurrent sketch?},
\textit{How can we prove error guarantees for such a sketch?}, and, in particular,
\textit{Can we reuse the myriad of clever analyses of existing sketches' error bounds in parallel settings without opening the black box?}
In this paper we address these questions.


\subsection{Our contributions}
\label{ssec:contribution}

The most common correctness condition for concurrent objects is linearizability. Roughly speaking,
it requires each parallel execution to have a \emph{linearization}, which is a sequential
execution of the object that ``looks like'' the parallel one. (See Section~\ref{sec:preliminaries} for a formal definition.)
But sometimes linearizability is too restrictive, leading to a high implementation cost.

In Section~\ref{sec:ivl}, we propose \emph{Intermediate Value Linearizability (IVL)},
a new correctness criterion for quantitative objects.
Intuitively, the return value of an operation
of an IVL object is bounded between two legal values that can be returned in linearizations.
The motivation for allowing this is that if the system designer is happy with
either of the legal values, then the intermediate value should also be fine.
For example, consider a system where processes
count events, and a monitoring process detects when the number of events passes a threshold.
The monitor constantly reads a shared counter, which other process increment in batches.
If an operation increments the counter from $4$ to $7$
batching three events, IVL allows a concurrent read by the monitoring
process to return $6$, although there is no linearization
in which the counter holds $6$. We formally define IVL and prove that this property is
\emph{local}, meaning that a history composed of IVL objects is itself IVL.
This allows reasoning about single objects rather than about the system as a whole. We formulate
IVL first for sequential objects, and then extend it to capture randomized ones.

Sketching algorithms have sequential error analyses which we wish to leverage
for the concurrent case. In Section~\ref{sec:bounded-objects} we formally define $(\epsilon, \delta)$-bounded
objects, including concurrent ones. We then prove a key theorem about IVL, stating that an IVL
implementation of a sequential $(\epsilon, \delta)$-bounded object
is itself $(\epsilon, \delta)$-bounded. The importance of this theorem is that it provides a generic way to leverage
the vast literature on sequential $(\epsilon, \delta)$-bounded
sketches~\cite{morris1978counting, flajolet1985approximate, cichon2011approximate, liu2016one, cormode2005improved, agarwal2013mergeable}
in concurrent implementations.

As an example, in Section~\ref{sec:countMin}, we present a concurrent CountMin sketch~\cite{cormode2005improved},
which estimates the frequencies of items in a data stream. We prove that a straightforward
parallelization of this sketch is IVL. By the aforementioned theorem, we deduce that the concurrent sketch adheres
to the error guarantees of the original sequential one, without having to ``open'' the analysis. We note
that this parallelization is \emph{not} linearizable.

Finally, we show that IVL is sometimes inherently cheaper than linearizability. We illustrate
this in Section~\ref{sec:adder} via the example of a \emph{batched counter}. We present
a wait-free IVL implementation of this object from
single-writer-multi-reader (SWMR) registers with $O(1)$ step complexity for 
{\sc update} operations. We then prove a lower bound of $\Omega(n)$ step
complexity for the {\sc update} operation of any wait-free
linearizable implementation, using only SWMR registers.
This exemplifies that there is an inherent and unavoidable
cost when implementing linearizable algorithms, which can be circumvented
by implementing IVL algorithms instead.

\section{Preliminaries}
\label{sec:preliminaries}

Section~\ref{ssec:det-objects} discusses deterministic shared memory objects and defines linearizability.
In Section~\ref{ssec:rand-objects} we discuss randomized algorithms and their correctness criteria.

\subsection{Deterministic objects}
\label{ssec:det-objects}

We consider a standard shared memory model~\cite{herlihy1990linearizability}, where a set of
\emph{processes} access atomic shared memory variables. Accessing these
shared variables is instantaneous.
Processes take \emph{steps} according to an \emph{algorithm}, which is a deterministic
state machine, where a step can access a shared memory variable, do local computations, and possibly return
some value. An \emph{execution} of an algorithm is an alternating
sequence of steps and states. We focus on algorithms that
implement \emph{objects}, which support
\emph{operations}, such as {\sc read} and {\sc write}. Operations begin
with an \emph{invocation} step and end with a \emph{response} step.
A \emph{schedule}, denoted $\sigma$, is the order in
which processes take steps, and the operations they invoke
in invoke steps with their parameters. Because we consider deterministic algorithms,
$\sigma$ uniquely defines an execution of a given algorithm.

A \emph{history} is the sequence of invoke and response steps in an execution. Given an algorithm $A$ and
a schedule $\sigma$, $H(A, \sigma)$ is the history of the execution of $A$ with
schedule $\sigma$. A \emph{sequential} history is an alternating sequence of
invocations and their responses, beginning with an invoke step.
We denote the return value of operation $op$ with parameter $arg$ in history $H$ by $\text{ret}(op,H)$.
We refer to the invocation step of operation $op$ with parameter $arg$
by process $p$ as $inv_p(op(arg))$ and to its response
step by $rsp_p(op)\rightarrow ret$, where $ret = \text{ret}(op,H)$. A history
defines a partial order on operations: Operation $op_1$ \emph{precedes} $op_2$
in history $H$, denoted $op_1 \prec_H op_2$, if $rsp(op_1)$ precedes $inv(op_2(arg))$
in $H$. Two operations are \emph{concurrent} if neither precedes the other.

A \emph{well-formed} history is one that does not contain concurrent operations by the same process, and
where every response event for operation $op$ is preceded by an invocation of the same operation.
A schedule is well-formed if it gives rise to a well-formed history, and an execution
is well-formed if it is based on a well-formed schedule.
We denote by $H|_x$ the sub-history of $H$ consisting only of invocations and responses
on object $x$. Operation $op$ is \emph{pending} in a history $H$ if $op$ is invoked
in $H$ but does not return.

Correctness of an object's implementation is defined with respect to a
sequential specification $\mathcal{H}$, which is the object's set of allowed sequential histories.
If the history spans multiple objects, $\mathcal{H}$ consists of sequential histories $H$ such that for all
objects $x$, $H|_x$ pertains to $x$'s sequential specification (denoted $\mathcal{H}_x$).
A \emph{linearization}~\cite{herlihy1990linearizability} of a concurrent history $H$ is a
sequential history $H'$ such that (1) after removing some pending
operations from $H$ and completing others, it contains the same invocations and
responses as $H'$ with the same parameters and return values, and (2) $H'$
preserves the partial order $\prec_H$.
Algorithm $A$ is a \emph{linearizable implementation}
of a sequential specification $\mathcal{H}$ if every history of a
well-formed execution of $A$ has a linearization in $\mathcal{H}$.




\subsection{Randomized algorithms}
\label{ssec:rand-objects}


In randomized algorithms, processes have access to coin flips from some domain $\Omega$.
Every execution is associated with a coin flip vector $\vv{c}=(c_1, c_2, \dots)$,
where $c_i \in \Omega$ is the $i^\text{th}$ coin flip in the execution.
A \emph{randomized algorithm} $A$ is a probability distribution over deterministic
algorithms $\{A({\vv{c})}\}_{{\vv{c}} \in \Omega^{\infty}}$\footnote{We do not consider non-deterministic objects in this paper.},
arising when $A$ is instantiated with different coin flip vectors.
We denote by $H(A, {\vv{c}}, \sigma)$ the history of the execution of randomized algorithm
$A$ observing coin flip vector $\vv{c}$ in schedule $\sigma$.



Golab et al. show that randomized algorithms that use concurrent objects require a stronger
correctness criterion than linearizability, and propose \emph{strong linearizability}~\cite{golab2011linearizable}.
Roughly speaking, strong linearizability stipulates that the
mapping of histories to linearizations must be prefix-preserving,
so that future coin flips cannot impact the
linearization order of earlier events.
In contrast to us, they consider deterministic objects used by randomized algorithms. In this paper,
we focus on randomized object implementations.

\section{Intermediate value linearizability}
\label{sec:ivl}

Section~\ref{ssec:definitions} introduces definitions that we utilize to define IVL.
Section~\ref{ssec:ivl} defines IVL for deterministic algorithms and proves that it is a local property.
Section~\ref{ssec:sivl} extends IVL for randomized algorithms, and Section~\ref{ssec:comparisons}
compares IVL to other correctness criteria.

\subsection{Definitions}
\label{ssec:definitions}

Throughout this paper we consider the strongest progress guarantee, bounded wait-freedom.
An operation $op$ is \emph{bounded wait-free} if whenever any process $p$
invokes $op$, $op$ returns a response in a bounded number of
$p$'s steps, regardless of steps taken by other processes.
An operation's \emph{step-complexity} is the maximum number of steps
a process takes during a single execution of this operation. We
can convert every bounded wait-free algorithm to a \emph{uniform step complexity}
one, in which each operation takes the exact same
number of steps in every execution.
This can be achieved by padding shorter
execution paths with empty steps before returning.
Note that in a randomized algorithm with uniform step complexity, coin flips have
no impact on $op$'s execution times. For the remainder of this paper, we consider
algorithms with uniform step complexity.

Our definitions use the notion of skeleton histories:
A \emph{skeleton history} is a sequence of invocation and response events, where the return values
of the responses are undefined, denoted $?$. For a history $H$, we define the operator $H^?$
as altering all response values in $H$ to $?$, resulting in a skeleton
history.

In this paper we formulate correctness criteria for a class of objects we call
\emph{quantitative}. These are objects that support two operations: (1) {\sc update},
which captures all mutating operations and does not return a value; and (2) {\sc query},
which returns a value from a totally ordered domain. In a \emph{deterministic quantitative object}
the return values of {\sc query} operations are uniquely defined. Namely, the object's sequential specification $\mathcal{H}$
contains exactly one history for every sequential history skeleton $H$; we denote this
history by $\tau_\mathcal{H}(H)$. Thus, $\tau_\mathcal{H}(H^?) = H$ for every history $H\in \mathcal{H}$.
Furthermore, for every sequential skeleton history $H$, by definition, $\tau_\mathcal{H}(H) \in \mathcal{H}$.

\begin{example}

Consider an execution in which a batched counter initialized to $0$
is incremented by $3$ by process $p$ concurrently with a query by
process $q$, which returns $0$. Its history is:
\[ H = inv_p(inc(3)), inv_q(query), rsp_p(inc), rsp_q(query \rightarrow 0). \]
The skeleton history $H^?$ is:
\[ H^? = inv_p(inc(3)), inv_q(query), rsp_p(inc), rsp_q(query \rightarrow ?). \]
A possible linearization of $H^?$ is:
\[ H'=inv_p(inc(3)), rsp_p(inc), inv_q(query), rsp_q(query \rightarrow ?). \]
Given the sequential specification $\mathcal{H}$ of a batched counter,
we get:
\[ \tau_\mathcal{H}(H')=inv_p(inc(3)), rsp_p(inc), inv_q(query), rsp_q(query \rightarrow 3). \]
In a different linearization, the query may return $0$ instead.

\end{example}

\subsection{Intermediate value linearizability}
\label{ssec:ivl}


We now define intermediate value linearizability for quantitative objects.
\begin{definition}[Intermediate value linearizability]
  A history $H$ of an object is IVL with respect to sequential specification $\mathcal{H}$ if there
  exist two linearizations $H_1, H_2$ of $H^?$ such that for every {\sc query} $Q$
  that returns in $H$,
  \[\text{ret}(Q, \tau_\mathcal{H}(H_1)) \leq \text{ret}(Q, H) \leq \text{ret}(Q, \tau_\mathcal{H}(H_2)). \]

  Algorithm $A$ is an \emph{IVL implementation} of a sequential specification $\mathcal{H}$ if every
  history of a well-formed execution of $A$ is IVL with respect to $\mathcal{H}$.
\end{definition}

Note that a linearizable object is trivially IVL, as the skeleton history of the linearization of $H$ plays the roles
of both $H_1$ and $H_2$. The following theorem, proven in Appendix~\ref{sec:locality-proof}, shows that this property is local (as defined in~\cite{herlihy1990linearizability}):
\begin{restatable}{thm}{local}
  \label{thm:ivl-local}
  A history $H$ of a well-formed execution of algorithm $A$ over a set of objects $\mathcal{X}$
  is IVL if and only if for each object $x \in \mathcal{X}$, $H|_x$ is IVL.
\end{restatable}

Locality allows system designers to reason about their system in a modular fashion. Each object can be built separately,
and the system as a whole still satisfies the property.

\subsection{Extending IVL for randomized algorithms}
\label{ssec:sivl}



In a randomized algorithm $A$ with uniform step complexity, every invocation
of a given operation returns after the same number of steps,
regardless of the coin flip vector $\vv{c}$. This, in turn, implies that
for a given $\sigma$, for any $\vv{c}, \vv{c}' \in \Omega^{\infty}$, the
arising histories $H(A, \vv{c}, \sigma)$ and $H(A, \vv{c}', \sigma)$ differ only in the operations'
return values but not in the order of invocations and responses, as the latter is
determined by $\sigma$, so their skeletons are equal. For randomized algorithm $A$ and schedule $\sigma$,
we denote this arising skeleton history by $H^?(A, \sigma)$.

We are faced with a dilemma when defining the specification
of a randomized algorithm $A$, as the algorithm itself is a distribution over a
set of algorithms $\{A(\vv{c})\}_{\vv{c}\in \Omega^\infty}$. Without knowing
the observed coin flip vector $\vv{c}$, the execution behaves unpredictably. We therefore
define a deterministic sequential specification $\mathcal{H}(\vv{c})$ for each coin flip vector
$\vv{c} \in \Omega^\infty$, so the sequential specification is a
probability distribution on a set of sequential histories $\{\mathcal{H}(\vv{c})\}_{\vv{c}\in \Omega^\infty}$.

A correctness criterion for randomized objects needs to capture the property that the distribution of
a randomized algorithm's outcomes matches the distribution of behaviors allowed by the specification.
Consider, e.g., some sequential skeleton history $H$ of an object defined by $\{\mathcal{H}(\vv{c})\}_{\vv{c}\in \Omega^\infty}$. Let Q be a query
that returns in $H$, and assume that $Q$ has some probability $p$ to return a value $v$ in $\tau_{\mathcal{H}(\vv{c})}(H)$ for
a randomly sampled $\vv{c}$. Intuitively, we would expect that if a randomized algorithm $A$ ``implements'' the specification
$\{\mathcal{H}(\vv{c})\}_{\vv{c}\in \Omega^\infty}$, then $Q$ has a similar probability to return $v$ in sequential executions of $A$ with the same history,
and to some extent also in concurrent executions of $A$ of which $H$ is a linearization. In other words, we
would like the distribution of outcomes of $A$ to match the distribution of outcomes in $\{\mathcal{H}(\vv{c})\}_{\vv{c}\in \Omega^\infty}$.

We observe that in order to achieve this, it does not suffice to require that each history have an
arbitrary linearization as we did for deterministic objects, because this might not preserve the
desired distribution. Instead, for randomized objects we require a common linearization for each
skeleton history that will hold true under all possible coin flip vectors. We therefore formally
define IVL for randomized objects as follows:

\begin{definition}[IVL for randomized algorithms]

  Consider a skeleton history $H=H^?(A, \sigma)$ of some
  randomized algorithm $A$ with schedule $\sigma$.
  $H$ is \emph{IVL} with respect to $\{\mathcal{H}(\vv{c})\}_{\vv{c} \in \Omega^\infty}$ if there exist
  linearizations $H_1, H_2$ of $H$ such that for every coin flip vector $\vv{c}$ and query $Q$
  that returns in $H$,
  \[\text{ret}(Q, \tau_{\mathcal{H}(\vv{c})}(H_1)) \leq \text{ret}(Q, H(A, \vv{c}, \sigma)) \leq \text{ret}(Q, \tau_{\mathcal{H}(\vv{c})}(H_2)). \]

  Algorithm $A$ is an \emph{IVL implementation} of a sequential specification
  distribution $\{\mathcal{H}(\vv{c})\}_{\vv{c} \in \Omega^\infty}$ if every skeleton
  history of a well-formed execution of $A$ is IVL with
  respect to $\{\mathcal{H}(\vv{c})\}_{\vv{c} \in \Omega^\infty}$.
  \label{def:sivl}
\end{definition}

Note that since we require a common linearization under all coin flip vectors, we do not need to
strengthen IVL for randomized settings in the manner that strong linearizability
strengthens linearizability. This is because the linearizations we consider are a fortiori independent of future coin flips.

\subsection{Relationship to other relaxations}
\label{ssec:comparisons}

In spirit, IVL resembles the \emph{regularity} correctness condition for
single-writer registers~\cite{lamport1986interprocess}, where a query must return
either a value written by a concurrent write or the last value written
by a write that completed before the query began. Stylianopoulos  
et al.~\cite{stylianopoulos2020delegation} adopt a similar condition for data sketches, which they
informally describe as follows: ``a query takes into account all completed
insert operations and possibly a subset of the overlapping ones.'' If
the object's estimated quantity (return value) is monotonically increasing
throughout every execution, then IVL essentially formalizes this condition,
while also allowing intermediate steps of a single update to be observed.
But this is not the case in general. Consider, for example, an
object supporting increment and decrement, and a query that occurs concurrently
with an increment and an ensuing decrement. If the query takes only the decrement
into account (and not the increment), it returns a value that is smaller than all legal return values
that may be returned in linearizations, which violates IVL. Our interval-based
formalization is instrumental to ensuring that a concurrent IVL implementation
preserves the probabilistic error bounds of the respective sequential sketch, which we prove in the next section.

Previous work on set-linearizability~\cite{neiger1994set} and
interval-linearizability~\cite{castaneda2018unifying} has also relaxed linearizability,
allowing a larger set of return values in the presence of overlapping operations. The return
values, however, must be specified in advance by a given state machine; operations' effects
on one another must be predefined. In contrast to these, IVL is generic, and does not require
additional object-specific definitions; it provides an intuitive quantitative bound
on possible return values of arbitrary quantitative objects.

Henzinger et al.~\cite{henzinger2013quantitative} define
the quantitative relaxation framework, which
allows executions to differ from the sequential specification up to a bounded cost function.
Alistarh et al.
expand upon this and define \emph{distributional linearizability}~\cite{alistarh2018distributionally},
which requires a distribution over the internal
states of the object for its error analysis.
Rinberg et al. consider strongly linearizable $r$-relaxed semantics for randomized objects~\cite{rinberg2019fast}.
We differ from these works in two points: First, a sequential history of an IVL object
must adhere to the sequential specification, whereas in these
relaxations even a sequential history may diverge from the specification. The second is that these relaxations
are measured with respect to a single linearization. We, instead, bound
the return value between two legal linearizations. The latter is the key to preserving the
error bounds of sequential objects, as we next show.




\section{\texorpdfstring{$(\epsilon,\delta)$}{(epsilon,delta)}-bounded objects}
\label{sec:bounded-objects}

In this section we show that for a large class of randomized objects,
IVL concurrent implementations preserve the error bounds of the respective
sequential ones. More specifically, we focus on randomized
objects like data sketches, which estimate some quantity (or quantities)
with probabilistic guarantees.
Sketches generally support two operations:
{\sc update}($a$), which processes element $a$, and {\sc query}($arg$),
which returns the quantity estimated by the sketch as a function of the previously
processed elements. Sequential sketch algorithms typically have probabilistic error bounds. For example,
the Quantiles sketch estimates the rank of a given element in
a stream within $\pm \epsilon n$ of the true rank, with probability
at least $1-\delta$~\cite{agarwal2013mergeable}. 

We consider in this section a general class of
\emph{$(\epsilon, \delta)$-bounded objects} capturing PAC algorithms. A bounded object's
behavior is defined relative to a deterministic
sequential specification $\mathcal{I}$, which uniquely defines
the \emph{ideal} return value for every query in a sequential
execution. In an $(\epsilon, \delta)$-bounded $\mathcal{I}$ object,
each query returns the ideal return value within
an error of at most $\epsilon$ with probability at least $1-\delta$.
More specifically, it over-estimates (and similarly under-estimates) the
ideal quantity by at most $\epsilon$ with probability at least $1-\frac{\delta}{2}$. 
Formally:
\begin{definition}
    A sequential randomized algorithm $A$ implements an $(\epsilon,\delta)$-bounded $\mathcal{I}$
    object if for every query $Q$ returning in
    an execution of $A$ with any schedule $\sigma$ and a randomly sampled
    coin flip vector $\vv{c} \in \Omega^\infty$,
    \[ ret(Q,H(A,\sigma,\vv{c})) \geq ret(Q, \tau_{\mathcal{I}}(H^?(A,\sigma)) - \epsilon \text{ with probability at least } 1-\frac{\delta}{2},\]
    and 
    \[ ret(Q,H(A,\sigma,\vv{c})) \leq ret(Q, \tau_{\mathcal{I}}(H^?(A,\sigma)) + \epsilon \text{ with probability at least } 1-\frac{\delta}{2}.\]
    \label{def:seq-e,d-obj}
\end{definition}

A sequential algorithm $A$ satisfying Definition~\ref{def:seq-e,d-obj} induces a sequential specification
$\{A(\vv{c})\}_{\vv{c} \in \Omega^\infty}$ of an $(\epsilon,\delta)$-bounded $\mathcal{I}$ object.
We next discuss parallel implementations of this specification.

To this end, we must specify a correctness criterion on the object's
concurrent executions. As previously stated, the standard notion (for randomized algorithms)
is strong linearizability, stipulating that we can ``collapse'' each operation in the concurrent
schedule to a single point in time. Intuitively, this means that every query returns a value
that could have been returned by the randomized algorithm at some point during its execution 
interval. So the query returns an $(\epsilon, \delta)$ approximation of the ideal value at that particular point.
But this point is arbitrarily chosen, meaning that the query may return an $\epsilon$ approximation
of any value that the ideal object takes during the query's execution. We therefore look at the minimum and
maximum values that the ideal object may take during a query's interval, and bound the error relative to these values. 


We first define these minimum and maximum values as follows: For a history $H$,
denote by $\mathcal{L}(H^?)$ the set of linearizations of $H^?$.
For a query $Q$ that returns in $H$ and an ideal specification $\mathcal{I}$, we define: 
\[ v^{\mathcal{I}}_{min}(H,Q) \triangleq \min \{ \tau_\mathcal{I}(L) \mid L \in \mathcal{L}(H^?)\} ; \\
v^{\mathcal{I}}_{max}(H,Q) \triangleq \max \{ \tau_\mathcal{I}(L) \mid L \in \mathcal{L}(H^?)\}.
\]
Note that even if $H$ is infinite and has infinitely many linearizations,
because $Q$ returns in $H$, it appears in each linearization by
the end of its execution interval, and therefore $Q$ can return a finite number of different
values in these linearizations, and so the minimum and maximum are well-defined.
Correctness of concurrent $(\epsilon, \delta)$-bounded objects is then formally defined as follows:

\begin{definition}
    A concurrent randomized algorithm $A$ implements an $(\epsilon,\delta)$-bounded
    $\mathcal{I}$ object if for every query $Q$ returning in
    an execution of $A$ with any schedule $\sigma$ and a randomly sampled
    coin flip vector $\vv{c} \in \Omega^\infty$,
    \[ ret(Q,H(A,\sigma,\vv{c})) \geq  v^{\mathcal{I}}_{min}(H(A,\sigma,\vv{c}),Q) - \epsilon \text{ with probability at least } 1-\frac{\delta}{2},\]
    and 
    \[ ret(Q,H(A,\sigma,\vv{c})) \leq v^{\mathcal{I}}_{max}(H(A,\sigma,\vv{c}),Q) + \epsilon \text{ with probability at least } 1-\frac{\delta}{2}.\]
    \label{def:con-e,d-obj}
\end{definition} 

In some algorithms, $\epsilon$ depends on the stream size, i.e.,
the number of updates preceding a query; to avoid cumbersome notations we use
a single variable $\epsilon$, which should be set to the maximum
value that the sketch's $\epsilon$ bound takes during the query's execution interval.

The following theorem shows
that IVL implementations allow us to leverage the
``legacy'' analysis of a sequential object's error bounds.
\begin{theorem}
    Consider a sequential specification $\{A(\vv{c})\}_{\vv{c} \in \Omega^\infty}$ of an $(\epsilon,\delta)$-bounded
    $\mathcal{I}$ object (Definition~\ref{def:seq-e,d-obj}). Let $A'$ be an IVL implementation of $A$ (Definition~\ref{def:sivl}). Then $A'$ implements a concurrent
    $(\epsilon,\delta)$-bounded $\mathcal{I}$ object (Definition~\ref{def:con-e,d-obj}).

    \label{thm:SIVL-bound}
\end{theorem}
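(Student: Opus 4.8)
Fix an arbitrary schedule $\sigma$ of $A'$ and an arbitrary query $Q$ returning in the execution $H(A',\sigma,\vv{c})$. The plan is to extract from the IVL property of $A'$ a \emph{single} pair of linearizations---the same pair for every coin-flip vector---view each of them as a sequential schedule of $A$, and then invoke the legacy sequential bound of Definition~\ref{def:seq-e,d-obj} on it. Since $A'$ is an IVL implementation of $A$, instantiating Definition~\ref{def:sivl} with $\mathcal{H}(\vv{c})=A(\vv{c})$ gives linearizations $H_1,H_2$ of the skeleton $H^?(A',\sigma)$ such that for every $\vv{c}\in\Omega^\infty$,
\[
  \text{ret}(Q,\tau_{A(\vv{c})}(H_1)) \;\leq\; \text{ret}(Q,H(A',\sigma,\vv{c})) \;\leq\; \text{ret}(Q,\tau_{A(\vv{c})}(H_2)).
\]
It is crucial that $H_1,H_2$ are fixed and do \emph{not} vary with $\vv{c}$; this is exactly where the ``common linearization'' formulation of Definition~\ref{def:sivl} is used.

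Each $H_i$ is a complete sequential skeleton history (a linearization completes or removes pending operations, and $Q$, being non-pending, survives in it), hence determines a sequential schedule $\sigma_i$. Running $A$ on $\sigma_i$ yields a history whose skeleton is $H_i$, i.e.\ $H^?(A,\sigma_i)=H_i$, and, because $\{A(\vv{c})\}$ is the specification \emph{induced} by $A$, the unique history in $A(\vv{c})$ with that skeleton is $H(A,\sigma_i,\vv{c})$; thus $\tau_{A(\vv{c})}(H_i)=H(A,\sigma_i,\vv{c})$. Applying Definition~\ref{def:seq-e,d-obj} to the sequential schedule $\sigma_2$ and a random $\vv{c}$, with probability at least $1-\frac{\delta}{2}$,
\[
  \text{ret}(Q,H(A,\sigma_2,\vv{c})) \;\leq\; \text{ret}\bigl(Q,\tau_{\mathcal{I}}(H^?(A,\sigma_2))\bigr)+\epsilon \;=\; \text{ret}(Q,\tau_{\mathcal{I}}(H_2))+\epsilon,
\]
and symmetrically $\text{ret}(Q,H(A,\sigma_1,\vv{c}))\geq\text{ret}(Q,\tau_{\mathcal{I}}(H_1))-\epsilon$ with probability at least $1-\frac{\delta}{2}$.

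I would then combine these with the extremal values. Since the skeleton of $H(A',\sigma,\vv{c})$ is $H^?(A',\sigma)$, both $H_1$ and $H_2$ lie in $\mathcal{L}\bigl(H(A',\sigma,\vv{c})^?\bigr)$, so by the definitions of $v^{\mathcal{I}}_{min}$ and $v^{\mathcal{I}}_{max}$ we get $\text{ret}(Q,\tau_{\mathcal{I}}(H_1))\geq v^{\mathcal{I}}_{min}(H(A',\sigma,\vv{c}),Q)$ and $\text{ret}(Q,\tau_{\mathcal{I}}(H_2))\leq v^{\mathcal{I}}_{max}(H(A',\sigma,\vv{c}),Q)$. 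Chaining for the upper bound---the IVL inequality (for every $\vv{c}$), then $\tau_{A(\vv{c})}(H_2)=H(A,\sigma_2,\vv{c})$, then the sequential bound (with probability $\geq 1-\frac{\delta}{2}$), then the $v^{\mathcal{I}}_{max}$ inequality (for every $\vv{c}$)---yields $\text{ret}(Q,H(A',\sigma,\vv{c}))\leq v^{\mathcal{I}}_{max}(H(A',\sigma,\vv{c}),Q)+\epsilon$ with probability $\geq 1-\frac{\delta}{2}$; the lower bound is the mirror image via $H_1$. The two claims of Definition~\ref{def:con-e,d-obj} are asserted separately, so no union bound over the two failure events is needed, and since the number of updates preceding $Q$ in $\sigma_1$ and $\sigma_2$ stays within $Q$'s execution interval, the sequential $\epsilon$ for those schedules is at most the interval-maximal $\epsilon$ of Definition~\ref{def:con-e,d-obj}.

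The step I expect to need the most care is the probabilistic bookkeeping above: the IVL inequalities are deterministic and quantified over all $\vv{c}$, whereas the sequential $(\epsilon,\delta)$ guarantee holds only with probability $1-\frac{\delta}{2}$ over $\vv{c}$, and one must argue the composition still has probability $1-\frac{\delta}{2}$. This goes through precisely because $\sigma_1$ and $\sigma_2$ are \emph{fixed} schedules, not functions of $\vv{c}$, so Definition~\ref{def:seq-e,d-obj} applies to them verbatim; were the IVL linearizations allowed to depend on the coins, $\sigma_i$ would be correlated with $\vv{c}$ and the sequential bound could not be invoked---which is the formal reason Definition~\ref{def:sivl} demands a single linearization valid under all coin-flip vectors.
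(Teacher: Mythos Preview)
Your proposal is correct and follows essentially the same approach as the paper's proof: obtain the coin-independent linearizations $H_1,H_2$ from the IVL property, apply the sequential $(\epsilon,\delta)$-bound to each, and sandwich the result using the definitions of $v^{\mathcal{I}}_{min}$ and $v^{\mathcal{I}}_{max}$. Your treatment is in fact more careful than the paper's---you make explicit the identification $\tau_{A(\vv{c})}(H_i)=H(A,\sigma_i,\vv{c})$ needed to legitimately invoke Definition~\ref{def:seq-e,d-obj}, and you spell out why the probabilistic composition goes through (the schedules $\sigma_i$ are fixed, not functions of $\vv{c}$), points the paper leaves implicit.
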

\begin{proof}
    Consider a skeleton history $H=H^?(A', \sigma)$ of $A'$ with some schedule $\sigma$, and a
    query $Q$ that returns in $H$. As $A'$ is an IVL implementation of $A$, there exist linearizations
    $H_1$ and $H_2$ of $H$, such that for every $\vv{c}\in\Omega^\infty$,
    $\text{ret}(Q, \tau_{A(\vv{c})}(H_1)) \leq \text{ret}(Q, H(A, \sigma, \vv{c})) \leq \text{ret}(Q, \tau_{A(\vv{c})}(H_1))$.
    As $\{A(\vv{c})\}_{\vv{c} \in \Omega^\infty}$ captures a sequential $(\epsilon,\delta)$-bounded $\mathcal{I}$ object,
    $\text{ret}(Q, \tau_{A(\vv{c})}(H_i)$ is bounded as follows:
    \[ \text{ret}(Q, \tau_{A(\vv{c})}(H_1)) \geq \text{ret}(Q, \tau_{\mathcal{I}}(H_1)) - \epsilon \text{ with probability at least } 1-\frac{\delta}{2}, \]
    and
    \[ \text{ret}(Q, \tau_{A(\vv{c})}(H_2)) \leq \text{ret}(Q, \tau_{\mathcal{I}}(H_2)) + \epsilon \text{ with probability at least } 1-\frac{\delta}{2}. \]
    Furthermore, by definition of $v_{min}$ and $v_{max}$:
    \[ \text{ret}(Q, \tau_{\mathcal{I}}(H_1)) \geq  v^{\mathcal{I}}_{min}(H(A', \sigma, \vv{c}),Q) ; \\ 
    \text{ret}(Q, \tau_{A(\vv{c})}(H_2)) \leq v^{\mathcal{I}}_{max}(H(A', \sigma, \vv{c}),Q).\]

    Therefore, with probability at least $1-\frac{\delta}{2}$, $\text{ret}(Q, H(A', \sigma, \vv{c})) \geq v^{\mathcal{I}}_{min}(H(A', \sigma, \vv{c}),Q) - \epsilon$ and
    with probability at least $1-\frac{\delta}{2}$, $\text{ret}(Q, H(A', \sigma, \vv{c})) \geq v^{\mathcal{I}}_{max}(H(A', \sigma, \vv{c}),Q) + \epsilon$, as needed.



\end{proof}

While easy to prove, Theorem~\ref{thm:SIVL-bound} shows that IVL is in some sense the ``right''
correctness property for $(\epsilon,\delta)$-bounded objects. It is less restrictive -- and as we
show below, sometimes cheaper to implement -- than linearizability, and yet strong enough
to preserve the salient properties of sequential executions of $(\epsilon,\delta)$-bounded objects. As
noted in Section~\ref{ssec:comparisons}, previously suggested relaxations do not inherently guarantee that
error bounds are preserved. For example, regular-like semantics, where a query ``sees''
some subset of the concurrent updates~\cite{stylianopoulos2020delegation},  satisfy IVL (and hence bound
the error) for monotonic objects albeit not for general ones. Indeed, if object values
can both increase and decrease, the results returned under such regular-like semantics can arbitrarily diverge from possible sequential ones.

The importance of Theorem~\ref{thm:SIVL-bound} is that it allows us to leverage
the vast literature on sequential $(\epsilon, \delta)$-bounded
objects~\cite{morris1978counting, flajolet1985approximate, cichon2011approximate, liu2016one, cormode2005improved, agarwal2013mergeable}
in concurrent implementations. As an example, in the next section we give an example of an IVL
parallelization of a popular data sketch. By Theorem~\ref{thm:SIVL-bound},
it preserves the original sketch's error bounds.


\section{Concurrent CountMin sketch}
\label{sec:countMin}

Cormode et al. propose the \emph{CountMin (CM)} sketch~\cite{cormode2005improved}, which
estimates the frequency of an item $a$, denoted $f_a$, in a data stream, where the data stream
is over some alphabet $\Sigma$. The CM sketch supports two operations: {\sc update}($a$),
which updates the object based on $a \in \Sigma$, and {\sc query}($a$), which returns
an estimate on the number of {\sc update}($a$) calls that preceded the query.

The sequential algorithm's underlying data structure is a matrix $c$ of $w \times d$ counters, for some parameters
$w,d$ determined accordingly to the desired error and probability bounds.
The sketch uses $d$ hash functions $h_i: \Sigma \mapsto [1,w]$, for $1 \leq i \leq d$.
The hash functions are generated using the random coin flip vector $\vv{c}$,
and have certain mathematical properties whose details are not essential for understanding this paper.
The algorithm's input (i.e., the schedule) is generated by a so-called \emph{weak adversary}, namely,
the input is independent of the randomly drawn hash functions.

The CountMin sketch, denoted $CM(\vv{c})$, is illustrated in Figure~\ref{img:cmSketch}, and its
pseudo-code is given in Algorithm~\ref{alg:count-min}.
On {\sc update}($a$), the sketch increments counters $c[i][h_i(a)]$ for every
$1 \leq i \leq d$. {\sc query}($a$) returns $\hat{f}_a=\min_{1 \leq i \leq d}\{c[i][h_i(a)]\}$.

\begin{algorithm}
    \begin{algorithmic}[1]

        \State array $c[1 \dots d][1 \dots w]$ \Comment{Initialized to $0$}
        \State hash functions $h_1, \dots h_d$ \Comment{$h_i: \Sigma \mapsto [1,w]$, initialized using $\vv{c}$}
        \Statex
        \Procedure{update}{$a$}
        \For{$i : 1 \leq i \leq d$}
        \State atomically increment $c[i][h_i(a)]$ \label{l:counter-inc}
        \EndFor
        \EndProcedure


        \Procedure{query}{$a$}
        \State $min \gets \infty$
        \For{$i : 1 \leq i \leq d$}
        \State $c \gets c[i][h_i(a)]$ \label{l:read-min}
        \State \algorithmicif\ $min > c$ \ \algorithmicthen\ $min \gets c$ \label{l:min-update}
        \EndFor
        \State \textbf{return} $min$
        \EndProcedure
    \end{algorithmic}
    \caption{CountMin($\vec{c}$) sketch.}
    \label{alg:count-min}
\end{algorithm}

Cormode et al. show that, for desired bounds $\delta$ and $\alpha$, given appropriate values of $w$ and $d$, with probability
at least $1-\delta$, the estimate of a query returning $\hat{f}_a$ is bounded by $f_a \leq \hat{f}_a \leq f_a + \alpha n$,
where $n$ is the the number of updates preceding the query.
Thus, for $\epsilon= \alpha n$, CM is a sequential $(\epsilon, \delta)$-bounded
object. Its sequential specification distribution is $\{CM(\vv{c})\}_{\vv{c} \in \Omega^\infty}$.

\afterpage{
  \begin{figure}
    \begin{center}
     \includegraphics[width=0.4\textwidth,trim=10 0 30 10,clip]{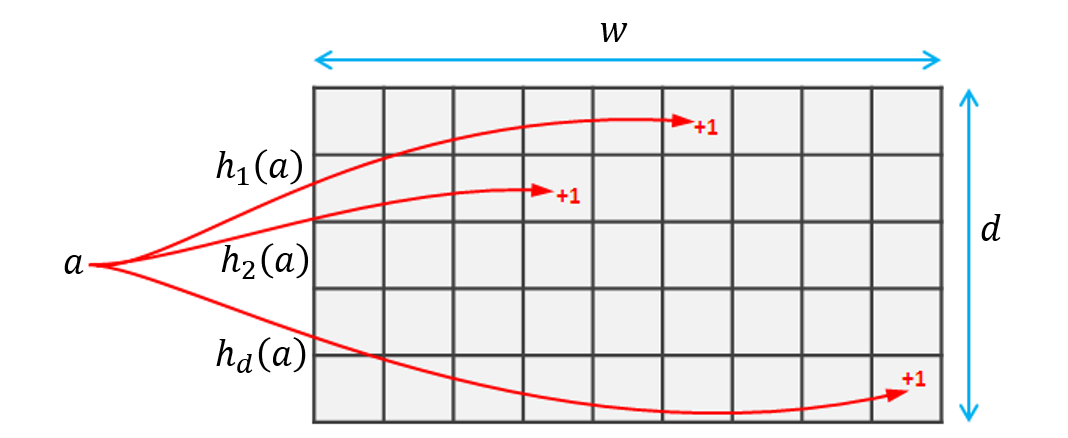}
      \caption[The LOF caption]{An example CountMin sketch, of size $w \times d$, where $h_1(a)=6$, $h_2(a)=4$ and $h_d(a)=w$.\footnotemark}
     \label{img:cmSketch}
    \end{center}
  \end{figure}
  \footnotetext{Source: \url{https://stackoverflow.com/questions/6811351/explaining-the-count-sketch-algorithm}, with alterations.}
}

Proving an error bound for an efficient parallel implementation of the CM sketch for existing criteria is not trivial.
Using the framework defined by Rinberg et al.~\cite{rinberg2019fast} requires the query to take a
strongly linearizable snapshot of the matrix~\cite{ovens2019strongly}.
Distributional linearizability~\cite{alistarh2018distributionally} necessitates an analysis of the error bounds
directly in the concurrent setting, without leveraging the sketch's existing analysis for the sequential setting.

Instead, we utilize IVL to leverage the sequential analysis for a parallelization that
is not strongly linearizable (or indeed linearizable), without using a
snapshot. Consider the straightforward parallelization of the CM sketch,
whereby the operations of Algorithm~\ref{alg:count-min} may be invoked concurrently and each counter is atomically incremented on
line~\ref{l:counter-inc} and read on line~\ref{l:read-min}. We call this parallelization $PCM(\vv{c})$. We next
prove that i is IVL.

\begin{lemma}
    $PCM$ is an IVL implementation of $CM$.
    \label{lmma:count-min-ivl}
\end{lemma}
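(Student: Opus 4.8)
The plan is to invoke locality (Theorem~\ref{thm:ivl-local}) so that it suffices to show each individual counter $c[i][j]$ is an IVL object, and then to combine the per-counter linearizations into the two global linearizations $H_1, H_2$ required by Definition~\ref{def:sivl}. First I would fix a schedule $\sigma$ and the skeleton history $H = H^?(PCM,\sigma)$. Each {\sc update}$(a)$ touches exactly one counter in each row $i$, namely $c[i][h_i(a)]$, and each {\sc query}$(a)$ reads exactly one counter per row on line~\ref{l:read-min}; the reads happen sequentially in the loop, so a query reads each of its $d$ counters at a distinct point within its own interval. Since each counter is atomically incremented and atomically read, the restriction of the execution to the accesses of a single counter is literally a linearizable fetch-and-increment/read register: every such access is a single atomic step, so ordering accesses by the real time of their atomic steps yields a valid sequential history. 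Hence each counter is linearizable, a fortiori IVL.

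The key step is to get from the per-counter linearizations to the interval bound on the {\sc query}'s actual return value $\hat f_a = \min_i c[i][h_i(a)]$ read concurrently. Here is the idea I would carry out. For a fixed query $Q$ reading value $\hat f_a$, in row $i$ let $r_i$ be the value $Q$ read from $c[i][h_i(a)]$. Because that counter is only incremented (monotone, and by the atomic-step linearization) we have: $r_i$ is at least the number of {\sc update}$(a)$ operations that \emph{completed} before $Q$'s read of row $i$ started, and at most the number of {\sc update}$(a)$ operations that \emph{started} before $Q$'s read of row $i$ completed. Define $H_1$ to be a linearization of $H^?$ that orders every {\sc update} concurrent with $Q$ \emph{after} $Q$ (so in $\tau_{\mathcal I}(H_1)$, or $\tau_{CM(\vv c)}(H_1)$, the query sees only updates that genuinely precede it), and $H_2$ to be a linearization that orders every concurrent {\sc update} \emph{before} $Q$. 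One must check these are legitimate linearizations, i.e. respect $\prec_H$ and are realizable simultaneously for \emph{all} queries — this is where a little care is needed, but it works because "push all concurrent updates after all concurrent queries" and its dual are each consistent global orders (any topological completion of $\prec_H$ with this tie-breaking rule). Then in $\tau_{CM(\vv c)}(H_1)$ the query's row-$i$ value is the count of {\sc update}$(a)$'s preceding $Q$, which is $\le r_i$ for every $i$, so $\text{ret}(Q,\tau_{CM(\vv c)}(H_1)) = \min_i(\text{that count}) \le \min_i r_i = \hat f_a$; symmetrically $\hat f_a \le \text{ret}(Q,\tau_{CM(\vv c)}(H_2))$. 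Since this holds for every $\vv c$ (the hash functions are fixed by $\vv c$ but the argument above never used anything about them beyond "{\sc update}$(a)$ increments $c[i][h_i(a)]$"), we get exactly the inequality in Definition~\ref{def:sivl}.

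I would structure the write-up as: (1) a paragraph reducing to a single counter via Theorem~\ref{thm:ivl-local}, noting atomic increment/read makes each counter linearizable hence IVL — this actually already yields that $PCM$ is IVL by locality, so strictly the whole lemma follows from Theorem~\ref{thm:ivl-local} once we observe each counter is linearizable; (2) but since the reader may want the concrete $H_1,H_2$, I would additionally spell out the "all concurrent updates last / all concurrent updates first" construction and the $\min$ monotonicity argument above to make the $(\epsilon,\delta)$ consequence (via Theorem~\ref{thm:SIVL-bound}) transparent.

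The main obstacle I expect is the bookkeeping in step (2): verifying that a \emph{single} pair $H_1,H_2$ of global linearizations simultaneously sandwiches \emph{every} query's return value, rather than needing a different linearization per query. The resolution is that the "delay all concurrent updates past all concurrent queries" rule (respectively its mirror) defines one fixed completion of the partial order $\prec_H$ that works uniformly, and that the per-counter monotonicity makes the row-wise comparison, and then the $\min$ over rows, go through for each query independently against that same global order. The hash-function/weak-adversary details from the CM analysis are irrelevant to IVL itself — they only matter when we later quote Theorem~\ref{thm:SIVL-bound} — so I would explicitly remark that the IVL proof is oblivious to $\vv c$.
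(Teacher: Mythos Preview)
Your step~(2) is essentially the paper's proof: define $H_1$ by linearizing every query before all updates concurrent with it and $H_2$ by linearizing every query after them, use monotonicity of the counters to sandwich each row's read between its value in the two sequential runs, and pass to the $\min$. One minor imprecision: the value $r_i$ read from $c[i][h_i(a)]$ reflects \emph{all} updates $b$ with $h_i(b)=h_i(a)$ that have already executed the row-$i$ increment, not only {\sc update}$(a)$ calls; phrase the bound as $c_{\sigma_1}(Q)[i]\le r_i\le c_{\sigma_2}(Q)[i]$ rather than as a count of {\sc update}$(a)$'s.

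The genuine gap is step~(1). Locality (Theorem~\ref{thm:ivl-local}) composes several IVL \emph{objects}, each accessed only by operations on that object; it lets you conclude that a history over many sketches is IVL if each sketch is. It does \emph{not} let you build one sketch's IVL out of its internal registers. Here the CM sketch is a single quantitative object whose {\sc query} reads $d$ different counters and returns their minimum. If you restrict the sketch-level history to a single counter $c[i][j]$ you no longer see the {\sc query} operation at all---you see only atomic reads and increments---so the per-counter (linearizable, hence IVL) conclusion says nothing about the sketch's query return value, which is not the return value of any operation on a single counter. In short, ``each counter is linearizable $\Rightarrow$ PCM is IVL by locality'' conflates the internal shared variables with the object whose IVL is at stake. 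Drop step~(1); step~(2) alone is the proof.
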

\begin{proof}
    Let $H$ be a history of an execution $\sigma$ of $PCM$.
    Let $H_1$ be a linearization of $H^?$ such that every query is linearized prior to every
    concurrent update, and let $H_2$ be a linearization of $H^?$ such that every query is linearized after every
    concurrent update. Let $\sigma_i$ for $i=1,2$ be a sequential execution of $CM$ with history $H_i$.
    Consider some $Q=${\sc query}($a$) that returns in $H$, and let $U_1,\dots,U_k$ be the concurrent updates to $Q$.
    
    Denote by $c_\sigma(Q)[i]$ the value read by $Q$ from $c[i][h_i(a)]$ in line~\ref{l:read-min} of Algorithm~\ref{alg:count-min}
    in an execution $\sigma$.
    As processes only increment counters, for every $1 \leq i \leq d$, $c_{\sigma}(Q)[i]$ is at least
    $c_{\sigma_1}(Q)[i]$ (the value when the query starts) and at most $c_{\sigma_2}(Q)[i]$ (the value when
    all updates concurrent to the query complete). Therefore,
    $c_{\sigma_1}(Q)[i] \leq c_{\sigma}(Q)[i] \leq c_{\sigma_2}(Q)[i]$.

    Consider a randomly sampled coin flip vector $\vv{c} \in \Omega^\infty$.
    Let $j$ be the loop index the last time query $Q$ alters the value of its local variable $min$ (line~\ref{l:min-update}),
    i.e., the index of the minimum read value.
    As a query in a history of $CM(\vv{c})$ returns the minimum value in the array, $\text{ret}(Q, \tau_{CM(\vv{c})}(H_1)) \leq c_{\sigma_1}(Q)[j]$. Furthermore, $\text{ret}(Q, \tau_{CM(\vv{c})}(H_2))$
    is at least $c_{\sigma}(Q)[j]$, otherwise $Q$ would have read this value and returned it instead. Therefore:
    \[
        \text{ret}(Q, \tau_{CM(\vv{c}))}(H_1)) \leq \text{ret}(Q, H(PCM, \sigma, \vv{c})) \leq \text{ret}(Q, \tau_{CM(\vv{c})}(H_2))
    \]
    As needed.
\end{proof}

Combining Lemma~\ref{lmma:count-min-ivl} and Theorem~\ref{thm:SIVL-bound}, and by utilizing the sequential
error analysis from~\cite{cormode2005improved}, we have shown the following corollary:
\begin{corollary}
    Let $\hat{f}_a$ be a return value from query $Q$. Let $f_a^\text{start}$ be the ideal frequency of element $a$
    when the query starts, and let $f_a^\text{end}$ be the ideal frequency of element $a$ at its end, and let $\epsilon=\alpha n$ where $n$
    is the stream length at the end of the query. Then:
    \[ f_a^\text{start} \leq \hat{f}_a \leq f_a^\text{end} + \epsilon \text{ with probability at least } 1-\delta.\]
\end{corollary}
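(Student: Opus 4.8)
The plan is to read the corollary off directly from Lemma~\ref{lmma:count-min-ivl} and Theorem~\ref{thm:SIVL-bound}, once the ideal specification and the sequential guarantee of~\cite{cormode2005improved} are substituted in. First I would fix the ideal specification $\mathcal{I}$ to be exact frequency counting: in a sequential history, the ideal return value of $\textsc{query}(a)$ is the number of $\textsc{update}(a)$ operations that precede it. Under this $\mathcal{I}$, the quoted bound $f_a \le \hat f_a \le f_a + \alpha n$ (holding with probability at least $1-\delta$ for the sequential sketch) is exactly the statement that $CM$ is an $(\alpha n,\delta)$-bounded $\mathcal{I}$ object in the sense of Definition~\ref{def:seq-e,d-obj}; note that it meets the under-estimation clause with probability $1$, since $CM$ never under-estimates. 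Hence $\{CM(\vv{c})\}_{\vv{c}\in\Omega^\infty}$ is a legitimate sequential $(\epsilon,\delta)$-bounded specification with $\epsilon=\alpha n$ to feed into Theorem~\ref{thm:SIVL-bound}.

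Next I would chain the results. Lemma~\ref{lmma:count-min-ivl} gives that $PCM$ is an IVL implementation of $CM$, so Theorem~\ref{thm:SIVL-bound} gives that $PCM$ implements a concurrent $(\epsilon,\delta)$-bounded $\mathcal{I}$ object; i.e., for the query $Q$ returning $\hat f_a$ in a history $H$ of $PCM$, Definition~\ref{def:con-e,d-obj} holds with $\epsilon=\alpha n$. It then remains to identify the quantities $v^{\mathcal{I}}_{min}(H,Q)$ and $v^{\mathcal{I}}_{max}(H,Q)$ appearing there. Among all linearizations of $H^?$, the ideal count of $\textsc{query}(a)$ is minimized by the one that orders $Q$ before all of its concurrent $\textsc{update}$s -- counting exactly the $\textsc{update}(a)$'s that completed before $Q$ was invoked, i.e., $f_a^{\text{start}}$ -- and no linearization can count fewer, since a linearization must respect $\prec_H$. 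Symmetrically, the count is maximized by the linearization that orders $Q$ after all of its concurrent updates, counting every $\textsc{update}(a)$ whose invocation precedes $Q$'s response, i.e., $f_a^{\text{end}}$. (These are the linearizations $H_1$ and $H_2$ already constructed in the proof of Lemma~\ref{lmma:count-min-ivl}.) A small sanity check, in line with the remark following Definition~\ref{def:con-e,d-obj}, is that the stream length seen by $Q$ in the maximizing linearization is at most $n$, so Cormode's error there is at most $\alpha n = \epsilon$.

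Finally I would assemble the probabilities. The upper half of Definition~\ref{def:con-e,d-obj}, read through the identification above, says $\hat f_a \le v^{\mathcal{I}}_{max}(H,Q) + \epsilon = f_a^{\text{end}} + \epsilon$ with probability at least $1-\delta/2$. For the lower half, the one-sidedness of the CM error is what makes the corollary's clean bound come out: for every $\vv{c}$, $\text{ret}(Q,\tau_{CM(\vv{c})}(H_1)) \ge \text{ret}(Q,\tau_{\mathcal{I}}(H_1)) = f_a^{\text{start}}$, and $PCM$'s query returns at least $\text{ret}(Q,\tau_{CM(\vv{c})}(H_1))$ because its counters only increase, so $\hat f_a \ge f_a^{\text{start}}$ holds deterministically, with no $-\epsilon$ slack and no failure probability. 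Combining the two halves yields $f_a^{\text{start}} \le \hat f_a \le f_a^{\text{end}} + \epsilon$ with probability at least $1-\delta/2 \ge 1-\delta$, which is the claim. I expect the only genuine work to be this last bookkeeping -- matching $v_{min}$ and $v_{max}$ to $f_a^{\text{start}}$ and $f_a^{\text{end}}$ via the extremal linearizations, and tracking that the CM error is one-sided so the probability $1-\delta$ is reached without a union-bound loss; the step from IVL to the preservation of the error bound is supplied wholesale by Theorem~\ref{thm:SIVL-bound}.
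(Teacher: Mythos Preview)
Your proposal is correct and follows the same approach as the paper, which derives the corollary by ``combining Lemma~\ref{lmma:count-min-ivl} and Theorem~\ref{thm:SIVL-bound}, and by utilizing the sequential error analysis from~\cite{cormode2005improved}.'' You are simply more explicit than the paper about two points it leaves implicit: the identification of $v^{\mathcal{I}}_{min}$ and $v^{\mathcal{I}}_{max}$ with $f_a^{\text{start}}$ and $f_a^{\text{end}}$ via the extremal linearizations, and the use of the one-sidedness of the CountMin error to drop the $-\epsilon$ on the lower bound (a direct application of Definition~\ref{def:con-e,d-obj} alone would yield only $f_a^{\text{start}}-\epsilon \le \hat f_a$).
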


The following example demonstrates that $PCM$ is not a linearizable implementation of $CM$.

\begin{example}
    Consider the following execution $\sigma$ of $PDC$: Assume that $\vv{c}$ is such that $h_1(a)=h_2(a)=1$, $h_1(b)=2$ and $h_2(b)=1$.
    Assume that initially
    \[ c=\CM{1}{4}{2}{3}.\]
    First, process $p$ invokes $U=${\sc update}$(a)$ which increments $c[1][1]$ to $2$ and stalls.
    Then, process $p$ invokes $Q_1=${\sc query}$(a)$ which reads $c[1][1]$ and $c[2][1]$ and returns $2$,
    followed by $Q_2=${\sc query}$(b)$ which reads $c[1][2]$ and $c[2][1]$ and returns $2$. Finally, process $p$ increments $c[2][1]$ to be $3$.

    Assume by contradiction that $H$ is a linearization if $\sigma$, and $H \in CM(\vv{c})$.
    The return values imply that $U \prec_H Q_1$ and $Q_2 \prec_H U$. As $H$ is a linearization, it maintains
    the partial order of operations in $\sigma$, therefore $Q_1 \prec_H Q_2$. A contradiction.
\end{example}

\section{Shared batched counter}
\label{sec:adder}

We now show an example where IVL is inherently less costly than linearizability.
In Section~\ref{ssec:ivl-adder} we present an IVL batched counter, and show that the {\sc update} operation
has step complexity $O(1)$. The algorithm uses single-writer-multi-reader(SWMR) registers.
In Section~\ref{ssec:lower-bound} we prove that all linearizable implementations
of a batched counter using SWMR registers have step complexity $\Omega(n)$ for the {\sc update} operation.

\subsection{IVL batched counter}
\label{ssec:ivl-adder}

We consider a \emph{batched counter} object, which supports the operations {\sc update}($v$) where $v \geq 0$, and {\sc read}().
The sequential specification for this object is simple: a {\sc read} operation returns the sum of all values passed to {\sc update}
operations that precede it, and $0$ if no {\sc update} operations were invoked. The {\sc update} operation returns nothing. When the
object is shared, we denote an invocation of {\sc update} by process $i$ as {\sc update}$_i$. We denote the sequential specification
of the batched counter by ${\mathcal H}$.

\begin{algorithm}
    \begin{algorithmic}[1]

        \State shared array $v[1 \dots n]$
        \Procedure{update$_i$}{$v$}
        \State $v[i] \gets v[i] + v$
        \EndProcedure


        \Procedure{read}{}
        \State $\mathit{sum} \gets 0$
        \For{$i : 1 \leq i \leq n$}
        \State $\mathit{sum} \gets \mathit{sum} + v[i]$
        \EndFor
        \State \textbf{return} $\mathit{sum}$
        \EndProcedure
    \end{algorithmic}
    \caption{Algorithm for process $p_i$, implementing an IVL batched counter.}
    \label{alg:ivl-adder}
\end{algorithm}

Algorithm~\ref{alg:ivl-adder} presents an IVL implementation for a batched counter
with $n$ processes using an array $v$ of $n$ SWMR registers.
The implementation is a trivial parallelization: an {\sc update} operation increments
the process's local
register while a {\sc read} scans all registers and returns their sum. This
implementation is not linearizable because the reader may see a later {\sc update}
and miss an earlier one, as illustrated in Figure~\ref{img:adderIVL}.
We now prove the following lemma:
\begin{lemma}
    Algorithm~\ref{alg:ivl-adder} is an IVL implementation of a batched counter.
    \label{lmma:ivl-adder}
\end{lemma}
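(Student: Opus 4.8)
The plan is to exhibit, for an arbitrary history $H$ of Algorithm~\ref{alg:ivl-adder}, two linearizations $H_1$ and $H_2$ of $H^?$ that sandwich every {\sc read}'s return value. First I would fix notation: for a {\sc read} operation $R$ that returns in $H$ and each process index $i$, let $a_i$ be the value of register $v[i]$ that $R$ observes when it scans slot $i$ in its loop; then $\text{ret}(R,H)=\sum_{i=1}^n a_i$. Since each {\sc update}$_i$ only ever increases $v[i]$ (all update values are nonnegative), the value $a_i$ read by $R$ is at least the sum of all {\sc update}$_i$ operations that precede $R$ in $H$, and at most the sum of all {\sc update}$_i$ operations that either precede $R$ or are concurrent with $R$.

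Next I would define $H_1$ to be any linearization of $H^?$ in which every {\sc read} is ordered before all updates concurrent with it (equivalently, each read is placed as early as the partial order $\prec_H$ permits), and $H_2$ to be any linearization in which every {\sc read} is ordered after all updates concurrent with it. I must check these are genuine linearizations: they respect $\prec_H$ by construction, and they are valid sequential histories because updates commute (the sequential specification value is just the sum, so any ordering among the updates themselves is legal, and reads return nothing-affecting-nothing). Concretely, one can take the real-time order on updates, interleave the reads at the chosen extreme positions, and break remaining ties arbitrarily. Then in $\tau_{\mathcal H}(H_1)$, the read $R$ returns exactly the sum over $i$ of the {\sc update}$_i$ values preceding $R$ in $H$, which by the observation above is $\le \sum_i a_i = \text{ret}(R,H)$; and in $\tau_{\mathcal H}(H_2)$, $R$ returns the sum over $i$ of all {\sc update}$_i$ values that precede or overlap $R$, which is $\ge \sum_i a_i = \text{ret}(R,H)$. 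Hence $\text{ret}(R,\tau_{\mathcal H}(H_1)) \le \text{ret}(R,H) \le \text{ret}(R,\tau_{\mathcal H}(H_2))$ for every {\sc read} $R$, which is exactly the IVL condition. Since $H$ was arbitrary, $A$ is an IVL implementation of $\mathcal H$.

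The main obstacle is the monotonicity bookkeeping in the middle step: one must argue carefully that the value $a_i$ the reader picks up from SWMR register $v[i]$ is pinned between the ``pre-$R$'' partial sum and the ``pre-or-concurrent-with-$R$'' partial sum. This needs the facts that (i) $v[i]$ is single-writer so its value is a monotone nondecreasing step function of real time (each {\sc update}$_i$ adds its nonnegative increment atomically), (ii) the instant at which $R$ reads slot $i$ lies within $R$'s execution interval, so every update that finished before $R$ began has already been applied and no update that starts after $R$ finishes has been applied. A secondary, more pedestrian point is verifying that the two extreme orderings are actually consistent (no cycle is created when we push all concurrent reads to one side), which follows because reads are never ordered relative to one another by the construction except via $\prec_H$, and $\prec_H$ is already a partial order; I would state this explicitly rather than belabor it.
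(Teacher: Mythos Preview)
Your proposal is correct and follows essentially the same approach as the paper: define the two extreme linearizations that place every {\sc read} respectively before and after all of its concurrent updates, then use the monotonicity of each SWMR register (increments are nonnegative) to sandwich the per-slot values and hence the sum. The only minor omission is the routine completion step for pending operations (complete pending updates, drop pending reads) before constructing $H_1,H_2$, which the paper does explicitly but which your argument implicitly assumes.
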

\begin{proof}
    Let $H$ be a well-formed history of an execution $\sigma$ of Algorithm~\ref{alg:ivl-adder}.
    We first complete $H$ be adding appropriate responses to all {\sc update} operations, and removing all pending {\sc read} operations, we denote
    this completed history as $H'$.

    Let $H_1$ be a linearization of $H'^?$ given by ordering {\sc update} operations by their
    return steps, and ordering {\sc read} operations after all preceding operations in $H'^?$, and before concurrent ones. Operations
    with the same order are ordered arbitrarily.
    Let $H_2$ be a linearization of $H'^?$ given by ordering {\sc update} operations by their
    invocations, and ordering {\sc read} operations operations before all operations that precede them in $H'^?$, and after concurrent ones. Operations
    with the same order are ordered arbitrarily.
    Let $\sigma_i$ for $i=1,2$ be a sequential execution of a batched counter with history $\tau_\mathcal{H}(H_i)$.

    By construction, $H_1$ and $H_2$ are linearizations of $H'^?$. Let $R$ be some {\sc read} operation that completes
    in $H$. Let $v[1 \dots n]$ be the array as read by $R$ in $\sigma$, $v_1[1 \dots n]$ as read by $R$ in $\sigma_1$
    and $v_2[1 \dots n]$ as read by $R$ in $\sigma_2$. To show that
    $\text{ret}(R, \tau_\mathcal{H}(H_1)) \leq \text{ret}(R, H) \leq \text{ret}(R, \tau_\mathcal{H}(H_2))$,
    we show that $v_1[j] \leq v[j] \leq v_2[j]$ for every index $1 \leq j \leq n$.

    For some index $j$, only $p_j$ can increment $v[j]$. By the construction of $H_1$, all {\sc update} operations
    that precede $R$ in $H$ also precede it in $H_1$. Therefore $v_1[j] \leq v[j]$. Assume by contradiction that $v[j] > v_2[j]$.
    Consider all concurrent {\sc update} operations to $R$. After all concurrent {\sc update} operations end, the value
    of index $j$ is $v' \geq v[j] > v_2[j]$. However, by construction, $R$ is ordered after all concurrent {\sc update}
    operations in $H_2$, therefore $v' \leq v_2[j]$. This is a contradiction, and therefore $v[j] \leq v_2[j]$.

    As all entries in the array are non-negative, it follows that $\sum_{j=1}^n v_1[j] \leq \sum_{j=1}^n v[j] \leq \sum_{j=1}^n v_2[j]$, and
    therefore $\text{ret}(R, \tau_\mathcal{H}(H_1)) \leq \text{ret}(R, H) \leq \text{ret}(R, \tau_\mathcal{H}(H_2))$.
\end{proof}

Figure~\ref{img:adderIVL} shows a possible concurrent execution of Algorithm~\ref{alg:ivl-adder}.
This algorithm can efficiently implement a distributed or NUMA-friendly counter, as processes
only access their local registers thereby lowering the cost of incrementing the counter. This is of
great importance, as memory latencies are often the main bottleneck in shared object emulations~\cite{mahapatra1999processor}.
As there are no waits in
either {\sc update} or {\sc read}, it follows that the algorithm is wait-free. Furthermore, the {\sc read} step complexity
is $O(n)$, and the {\sc update} step complexity is $O(1)$. Thus, we have shown the following theorem:
\begin{theorem}
    There exists a bounded wait-free IVL implementation of a batched counter using only SWMR registers, such that the step complexity of {\sc update} is $O(1)$
    and the step complexity of {\sc read} is $O(n)$.
\end{theorem}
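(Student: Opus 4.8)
The plan is to take Algorithm~\ref{alg:ivl-adder} as the witness implementation and to assemble the theorem from Lemma~\ref{lmma:ivl-adder} together with three routine observations about that algorithm: that it uses only SWMR registers, that it is bounded wait-free, and that its operations have the claimed step complexities.

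First I would observe that the shared array $v[1\dots n]$ consists of SWMR registers: $v[i]$ is written only inside {\sc update}$_i$, i.e.\ only by process $p_i$, while any process may read $v[i]$ during a {\sc read}. Second, correctness as an IVL implementation of the batched counter $\mathcal{H}$ is exactly Lemma~\ref{lmma:ivl-adder}, so nothing further is needed there. Third, for progress and complexity I would inspect the two procedures directly: {\sc update}$_i$ reads $v[i]$, adds $v$ locally, and writes $v[i]$ back --- a constant number of steps, with no loops and no busy-waiting, hence $O(1)$ step complexity and bounded wait-free; {\sc read} executes a single loop of exactly $n$ iterations, each performing one shared read plus $O(1)$ local work, then returns --- hence $O(n)$ step complexity and, again, bounded wait-free since there is no waiting. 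Combining these with Lemma~\ref{lmma:ivl-adder} yields the theorem.

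The proof has no real obstacle; the statement is essentially a summary of Section~\ref{ssec:ivl-adder}. The only point deserving a line of care is exhibiting an explicit bound on the number of steps (so as to claim \emph{bounded} wait-freedom rather than plain wait-freedom): since neither operation contains a schedule-dependent loop, {\sc update} always finishes in a fixed constant number of steps and {\sc read} always in $n+O(1)$ steps, which is also consistent with the paper's standing assumption that all algorithms have uniform step complexity.
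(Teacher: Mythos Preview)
Your proposal is correct and follows essentially the same approach as the paper: take Algorithm~\ref{alg:ivl-adder} as the witness, invoke Lemma~\ref{lmma:ivl-adder} for IVL correctness, and then read off the register type, bounded wait-freedom, and step complexities by direct inspection of the code. The paper's own argument is even terser---it simply notes the absence of waits and the loop bounds---so your write-up is, if anything, slightly more explicit than what the paper provides.
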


\begin{figure}[b]
    \centering
    \includegraphics[width=0.7\textwidth]{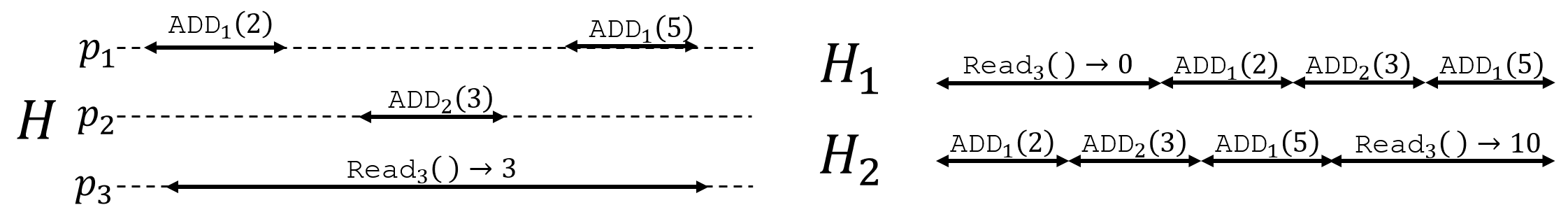}
    \caption{A possible concurrent history of the IVL batched counter: $p_1$ and
    $p_2$ update their local registers, while $p_3$ reads. $p_3$ returns an intermediate
    value between the counter's state when it starts, which is $0$, and the counter's state when it completes, which is $10$.}
    \label{img:adderIVL}
\end{figure}

\subsection{Lower bound for linearizable batched counter object}
\label{ssec:lower-bound}

The incentive for using an IVL batched counter instead of a linearizable one stems
from a lower bound on the step-complexity of a wait-free linearizable batched counter implementation from SWMR registers.
To show the lower bound we first define the binary snapshot object.
A \emph{snapshot object} has $n$ components written by separate processes, and allows a reader to
capture the shared variable states of all $n$ processes instantaneously. We
consider the \emph{binary snapshot object}, in which each state component may be either $0$ or $1$~\cite{hoepman1993binary}. The object
supports the {\sc update}$_i$($v$) and {\sc scan} operations, where the former sets the state of component $i$
to value a $v \in \{0,1\}$ and the latter returns all processes states instantaneously.
It is trivial that the {\sc scan} operation must read all states, therefore its lower bound step complexity
is $\Omega(n)$. Israeli and Shriazi~\cite{israeli1998time} show that the {\sc update} step complexity
of any implementation of a snapshot object from SWMR registers is also $\Omega(n)$. This lower bound
was shown to hold also for multi writer registers~\cite{attiya2006complexity}. While
their proof was originally given for a multi value snapshot object, it holds in the binary case as well~\cite{hoepman1993binary}.

\begin{algorithm}
    \begin{algorithmic}[1]

        \State local variable $v_i$ \Comment{Initialized to $0$}
        \State shared batched counter object $\mathit{BC}$
        \Statex
        \Procedure{update$_i$}{$v$}
        \State \algorithmicif\ $v_i = v$\ \algorithmicthen\ \textbf{return} \label{l:skip}
        \State $v_i \gets v$
        \State \algorithmicif\ $v = 1$\ \algorithmicthen\ $\mathit{BC}$.{\sc update}$_i$($2^i$) \label{l:set-1}
        \State \algorithmicif\ $v = 0$\ \algorithmicthen\ $\mathit{BC}$.{\sc update}$_i$($2^n - 2^i$) \label{l:set-0}
        \EndProcedure
        \Procedure{scan}{}
        \State $\mathit{sum} \gets \mathit{BC}$.{\sc read}() \label{l:read} 
        \State $v[0 \dots n-1] \gets [0 \dots 0]$ \Comment{Initialize an array of $0$'s}
        \For{$i : 0 \leq i \leq n-1$}
        \State \algorithmicif\ bit $i$ is set in $\mathit{sum}$\ \algorithmicthen\ $v[i] \gets 1$ \label{l:check-set}
        \EndFor
        \State \textbf{return} $v[0 \dots n-1]$
        \EndProcedure
    \end{algorithmic}
    \caption{Algorithm for process $p_i$, solving binary snapshot with a batched counter object.}
    \label{alg:bs-with-adder}
\end{algorithm}

To show a lower bound on the {\sc update} operation of wait-free linearizable batched counters,
we show a reduction from a binary snapshot to a batched counter in
Algorithm~\ref{alg:bs-with-adder}. It uses a local variable $v_i$ and a shared batched counter object.
In a nutshell, the idea is to encode the value of the $i^\text{th}$ component
of the binary snapshot using the $i^\text{th}$ least significant bit of the counter.
When the component changes from $0$ to $1$, {\sc update}$_i$ adds $2^i$, and when it changes from $1$ to $0$,
{\sc update}$_i$ adds $2^n - 2^i$. We now prove the following invariant:
\begin{invariant}
    At any point $t$ in history $H$ of a sequential execution of Algorithm~\ref{alg:bs-with-adder},
    the sum held by the counter is $c \cdot 2^n + \sum_{i=0}^{n-1}v_i2^i$,
    such that $v_i$ is the parameter passed to the last invocation of {\sc update}$_i$ in $H'$ before $t$ if such invocation
    exists, and $0$ otherwise, for some integer $c \in \mathbb{N}$.
    \label{inv:sum}
\end{invariant}
\begin{proof}
    We prove the invariant by induction on the length of $H$, i.e., the number of invocations in $H$,
    denoted $t$. As $H$ is a sequential history, each invocation is followed by a response.
    \par{\textbf{Base:}} The base if for $t=0$, i.e., $H$ is the empty execution. In this case no updates
    have been invoked, therefore $v_i=0$ for all $0 \leq i \leq n-1$. The sum returned by the counter
    is $0$. Choosing $c=0$ satisfies the invariant.
    \par{\textbf{Induction step:}} Our induction hypothesis is that the invariant holds for a history of length $t$.
    We prove that it holds for a history of length $t+1$. The last invocation can be either a {\sc scan}, or an {\sc update}($v$)
    by some process $p_i$. If it is a {\sc scan}, then the counter value doesn't change and the invariant
    holds. Otherwise, it is an {\sc update}($v$). Here, we note two cases. Let $v_i$ be $p_i$'s value
    prior to the {\sc update}($v$) invocation. If $v = v_i$, then the {\sc update} returns without altering the sum
    and the invariant holds. Otherwise, $v \neq v_i$. We analyze two cases, $v=1$ and $v=0$. If $v=1$, then $v_i=0$.
    The sum after the update is $c \cdot 2^n + \sum_{i=0}^{n-1}v_i2^i + 2^i=c \cdot 2^n + \sum_{i=0}^{n-1}v_i'2^i$, where
    $v_j'=v_j$ if $j \neq i$, and $v'_i = 1$, and the invariant holds. If $v=0$, then $v_i=1$.
    The sum after the update is $c \cdot 2^n + \sum_{i=0}^{n-1}v_i2^i + 2^n - 2^i = (c+1) \cdot 2^n + \sum_{i=0}^{n-1}v_i'2^i$,
    where $v_j'=v_j$ if $j \neq i$, and $v'_i = 1$, and the invariant holds.
\end{proof}

Using the invariant, we prove the following lemma:
\begin{lemma}
    For any sequential history $H$, if a {\sc scan} returns $v_i$, and {\sc update}$_i$($v$) is the last update invocation in $H$
    prior to the {\sc scan}, then $v_i = v$. If no such update exists, then $v_i=0$.
    \label{lmma:scan-correctness}
\end{lemma}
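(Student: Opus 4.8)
The plan is to read the $i$-th component returned by the {\sc scan} off the binary representation of the value it obtains from the counter on line~\ref{l:read}, using Invariant~\ref{inv:sum}. First I would fix the point $t$ in $H$ at which the {\sc scan} executes its read of $\mathit{BC}$ on line~\ref{l:read}. Since $H$ is sequential, the {\sc scan} runs with no interleaving, so no invocation lies strictly between the {\sc scan}'s own invocation and $t$; hence for every component $j$, the last invocation of {\sc update}$_j$ before $t$ coincides with the last invocation of {\sc update}$_j$ before the {\sc scan}. Write $v_0,\dots,v_{n-1}\in\{0,1\}$ for the parameters of those last updates (and $v_j=0$ where no such update exists); in the two cases of the lemma this gives $v_i=v$ and $v_i=0$ respectively. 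By Invariant~\ref{inv:sum}, the value $\mathit{sum}$ read at $t$ equals $c\cdot 2^n+\sum_{j=0}^{n-1}v_j2^j$ for some $c\in\mathbb{N}$.

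The key step is then elementary: since every $v_j\in\{0,1\}$ we have $0\le\sum_{j=0}^{n-1}v_j2^j\le 2^n-1<2^n$, so the low-order term occupies exactly the $n$ least significant bits of $\mathit{sum}$ while the term $c\cdot 2^n$ contributes nothing to those bits. Therefore bit $j$ of $\mathit{sum}$ is set if and only if $v_j=1$, for each $0\le j\le n-1$, and in particular bit $i$ of $\mathit{sum}$ is set iff $v_i=1$.

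To conclude I would trace the remainder of the {\sc scan}: its output array starts as all zeros and line~\ref{l:check-set} sets $v[i]$ to $1$ exactly when bit $i$ of $\mathit{sum}$ is set, hence $v[i]=v_i$. This is $v$ when {\sc update}$_i(v)$ is the last update by $p_i$ before the {\sc scan}, and $0$ when there is none, as claimed. The only place needing care is the identification of the invariant's quantified point $t$ with ``prior to the {\sc scan}'' in the statement; this is exactly where sequentiality of $H$ is used, and everything else is the base-$2$ bit-extraction above.
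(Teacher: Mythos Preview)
Your proposal is correct and follows essentially the same approach as the paper's own proof: apply Invariant~\ref{inv:sum} to obtain $\mathit{sum}=c\cdot 2^n+\sum_j v_j 2^j$, then read off bit $i$. In fact your argument is more careful than the paper's in two places: you explicitly justify the bit-extraction step via the bound $\sum_{j=0}^{n-1}v_j2^j<2^n$, and you note that sequentiality is what lets the invariant's point $t$ be identified with ``prior to the {\sc scan}''; the paper's proof leaves both of these implicit.
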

\begin{proof}
    Let $S$ be a {\sc scan} in $H'$. Consider the sum $\textit{sum}$ as read by scan $S$.
    From Invariant~\ref{inv:sum}, the value held by the counter is $c \cdot 2^n + \sum_{i=0}^{n-1}v_i2^i$.
    There are two cases, either there is an update invocation prior to $S$, or there isn't. If there isn't, then by
    Invariant~\ref{inv:sum} the corresponding $v_i=0$. The process sees bit $i=0$,
    and will return $0$. Therefore, the lemma holds.

    Otherwise, there is a an update prior to $S$ in $H$. As the sum is equal to $c \cdot 2^n + \sum_{i=0}^{n-1}v_i2^i$,
    by Invariant~\ref{inv:sum}, bit $i$ is equal to $1$ iff the parameter passed to the last invocation of update was $1$.
    Therefore, the scan returns the parameter of the last update and the lemma holds.
\end{proof}

\begin{lemma}
    Algorithm~\ref{alg:bs-with-adder} implements a linearizable binary snapshot using a linearizable batched counter.
    \label{lmma:reduction}
\end{lemma}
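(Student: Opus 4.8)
The plan is to lift the sequential reasoning already established in Invariant~\ref{inv:sum} and Lemma~\ref{lmma:scan-correctness} to the concurrent setting by exploiting the linearizability of the underlying batched counter. Fix a well-formed history $H$ of Algorithm~\ref{alg:bs-with-adder}, and let $H|_{BC}$ be its restriction to the shared batched counter $BC$. Since $BC$ is linearizable, $H|_{BC}$ has a linearization $L_{BC}$ that lies in the batched counter's sequential specification $\mathcal H$, obtained from $H|_{BC}$ by completing some pending $BC$ operations and discarding the rest. The key structural observation is that in Algorithm~\ref{alg:bs-with-adder} every {\sc scan} performs exactly one $BC.${\sc read} (line~\ref{l:read}) and nothing else shared, and every {\sc update}$_i(v)$ performs exactly one $BC.${\sc update}$_i$ unless it returns on line~\ref{l:skip}, in which case it takes no shared step at all.

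From $L_{BC}$ I would build a linearization $L$ of $H$ for the binary snapshot object by assigning linearization points: a {\sc scan} is linearized where its $BC.${\sc read} is linearized in $L_{BC}$; an {\sc update}$_i(v)$ that calls $BC.${\sc update}$_i$ is linearized where that call is linearized in $L_{BC}$; and an {\sc update}$_i(v)$ that returns on line~\ref{l:skip} is linearized at its own invocation. Pending {\sc update}$_i$ operations whose $BC.${\sc update}$_i$ was completed in $L_{BC}$ are completed (they return nothing); all remaining pending operations are discarded. Since every assigned linearization point lies inside the corresponding operation's execution interval, $L$ preserves the real-time order $\prec_H$, and by construction $L$ has the same invocations and responses (with the same return values) as the history obtained from $H$ by completing those pending operations and removing the others. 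Hence $L$ is a valid linearization of $H$, and it remains only to show that $L$ belongs to the sequential specification of the binary snapshot.

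For this last step I would consider the sequential execution $E$ obtained by replaying the operations of $L$, in $L$'s order, against Algorithm~\ref{alg:bs-with-adder}. Because the batched counter's sequential specification is deterministic and the $BC$ operations occur in $E$ in the same order as in $L_{BC}$, the $BC$ sub-history of $E$ is exactly $L_{BC}$; consequently each {\sc scan} returns in $E$ precisely the value it returns in $L$ (both are computed identically from the same $BC.${\sc read} value). Applying Lemma~\ref{lmma:scan-correctness} to $E$ then yields that each {\sc scan} returns, for every component $i$, the parameter of the last {\sc update}$_i$ preceding it in $L$ (and $0$ if there is none) --- which is exactly the binary snapshot's sequential specification. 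This establishes the lemma.

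I expect the main obstacle to be the careful handling of the skipping {\sc update}$_i(v)$ operations of line~\ref{l:skip}: they leave no shared trace, so there is no $BC$ operation to anchor their linearization point, yet they are genuine binary-snapshot operations that must appear in $L$ both consistently with $\prec_H$ and consistently with what scans return. The facts that make this go through are that $v_i$ is a single-writer variable faithfully tracking the parameter of $p_i$'s latest effectful update --- so whether a given {\sc update}$_i$ skips is a deterministic function of the parameter sequence of $p_i$'s preceding updates alone, hence agrees between $H$, $L$ and $E$ --- that well-formedness guarantees the only operation of a process that can be pending is its last, so discarding pending operations never disturbs the skip status of the remaining ones, and that placing a skipping update's linearization point at its invocation keeps it after $p_i$'s previous update while staying inside its interval.
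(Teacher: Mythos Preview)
Your proposal is correct and follows essentially the same approach as the paper: linearize each snapshot operation at its batched-counter access (or, for a skipping {\sc update}, at some point inside its interval), then invoke Lemma~\ref{lmma:scan-correctness} on the resulting sequential history. The paper's proof is a terse two-line version of your argument, differing only cosmetically in that it places a skipping {\sc update}'s linearization point at its response rather than at its invocation, which is immaterial since such updates are no-ops in the snapshot specification.
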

\begin{proof}
    Let $H$ be a history of Algorithm~\ref{alg:bs-with-adder}, and let $H'$
    be $H$ where each operation is linearized at its access to the linearizable batched counter, or
    its response if $v_i = v$ on line~\ref{l:skip}.
    Applying Lemma~\ref{lmma:scan-correctness} to $H'$, we get $H' \in \mathcal{H}$ and therefore $H$ is linearizable.
\end{proof}

It follows from the algorithm that if the counter
object is bounded wait-free then the {\sc scan} and {\sc update} operations are bounded wait-free. Therefore, the lower
bound proved by Israeli and Shriazi~\cite{israeli1998time} holds, and the {\sc update} must take $\Omega(n)$
steps. Other than the access to the counter in the {\sc update} operation, it takes
$O(1)$ steps. Therefore, the access to the counter object must take $\Omega(n)$ steps. We have proven the following theorem.
\begin{theorem}
    For any linearizable wait-free implementation of a batched counter object with $n$ processes from SWMR registers, the step-complexity
    of the {\sc update} operation is $\Omega(n)$.
    \label{thm:lower-bound}
\end{theorem}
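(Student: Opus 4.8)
The strategy is a reduction argument: we have already built (Algorithm~\ref{alg:bs-with-adder}) a transformation that turns any linearizable batched counter into a linearizable binary snapshot object, and we have already established in Lemma~\ref{lmma:reduction} that this transformation is correct. The remaining work is to turn this correctness statement into a complexity statement, using the known $\Omega(n)$ lower bound of Israeli and Shriazi~\cite{israeli1998time} on the {\sc update} step complexity of any snapshot implementation from SWMR registers (which, as noted, holds in the binary case and for multi-writer registers as well).

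First I would argue the progress-guarantee bookkeeping: if the batched counter $\mathit{BC}$ is bounded wait-free, then Algorithm~\ref{alg:bs-with-adder} is a bounded wait-free binary-snapshot implementation. This is immediate from inspection --- {\sc update}$_i$ performs $O(1)$ local work plus at most one call to $\mathit{BC}$.{\sc update}, and {\sc scan} performs one call to $\mathit{BC}$.{\sc read} plus an $O(n)$ scan over the bits of the returned sum --- so there are no unbounded loops or waits introduced by the reduction. Second, I would observe that the reduction implements a binary snapshot \emph{from SWMR registers}, provided $\mathit{BC}$ is itself implemented from SWMR registers: the only shared state in Algorithm~\ref{alg:bs-with-adder} other than $\mathit{BC}$ is the purely local variable $v_i$, so the register base of the composed object is exactly the register base of $\mathit{BC}$. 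Third, I would combine these with Lemma~\ref{lmma:reduction}: the composition is a bounded wait-free linearizable binary snapshot from SWMR registers, so the Israeli--Shriazi bound forces its {\sc update}$_i$ operation to take $\Omega(n)$ steps in the worst case.

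Finally I would account the steps. The {\sc update}$_i$ operation of the snapshot consists of: the constant-work test on line~\ref{l:skip}, the local assignment, and a single call to $\mathit{BC}$.{\sc update}$_i$ on line~\ref{l:set-1} or line~\ref{l:set-0}. All of this is $O(1)$ except the call into $\mathit{BC}$. Hence if $\mathit{BC}$.{\sc update} had step complexity $o(n)$, the composed {\sc update}$_i$ would have step complexity $o(n)$, contradicting the lower bound. Therefore $\mathit{BC}$.{\sc update} must take $\Omega(n)$ steps, which is exactly the claim (taking $\mathit{BC}$ to be an arbitrary linearizable wait-free batched counter from SWMR registers with $n$ processes).

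The argument is essentially routine once the reduction is in place; the one point requiring a little care is the accounting in the last paragraph --- making sure that the $\Omega(n)$ lower bound on the snapshot's {\sc update} cannot be ``absorbed'' by the $O(1)$ non-counter instructions, i.e., that the hardness genuinely propagates to the single counter call rather than being an artifact of the wrapper. This is what pins the $\Omega(n)$ onto $\mathit{BC}$.{\sc update} specifically, and it is where I would be most explicit.
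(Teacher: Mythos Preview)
Your proposal is correct and matches the paper's argument essentially line for line: establish wait-freedom of the composed snapshot, invoke the Israeli--Shriazi $\Omega(n)$ bound, and then observe that the wrapper's {\sc update}$_i$ contributes only $O(1)$ steps outside the single call to $\mathit{BC}$.{\sc update}, forcing that call to cost $\Omega(n)$. If anything, you are slightly more explicit than the paper in checking that the register base (SWMR) is preserved by the reduction, which is a worthwhile point to spell out.
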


\section{Conclusion}
\label{sec:conclusion}

We have presented IVL, a new correctness criterion that provides flexibility
in the return values of quantitative objects while bounding the error that this may
introduce. IVL has a number of desirable properties: First,
like linearizability, it is a local property, allowing designers to reason about each part
of the system separately. Second, also like linearizability but unlike other relaxations of
it, IVL preserves the error bounds of PAC objects. Third, IVL is generically defined for
all quantitative objects, and does not necessitate object-specific definitions. Finally,
IVL is inherently amenable to cheaper implementations than linearizability in some cases.

Via the example of a CountMin sketch, we have illustrated that IVL provides
a generic way to efficiently parallelize data sketches while leveraging their
sequential error analysis to bound the error in the concurrent implementation.

\bibliographystyle{plain}
\bibliography{bibliography}

\newpage

\appendix

\pagestyle{empty}

\section*{Appendix}

\section{Locality proof}
\label{sec:locality-proof}
We now prove that IVL is a local property:
\local
\begin{proof}
    Let $A$ be a deterministic algorithm, and let $\mathcal{H}_x$ be the sequential specification of object $x$, for every $x \in \mathcal{X}$.
    The ``only if'' part is immediate.
  
    Denote by ${H_1^x}, {H_2^x}$ the linearizations of ${H|_x}^?$ guaranteed by the
    definition of IVL.
    We first construct a linearization $H_1$ of $H^?$, defined
    by the order $\prec_{H_1}$ as follows: For every pending operation on object $x$, we either
    add the corresponding response or remove it based on ${H_1^x}$. We then construct a partial order
    of operations as the union of $\{{\prec}_{H_1^x}\}_{x \in \mathcal{X}}$ and the realtime order
    of operations in $H$. As ${\prec}_{H_1^x}$ must adhere to the realtime order of $H|_x$,
    and therefore $H$, and the order of operations in $\{{\prec}_{H_1^x}\}_{x \in \mathcal{X}}$ are disjoint, this partial order
    is well defined. Consider two concurrent operations $op_1, op_2$ in $H$. If they
    do not belong to the same history $H|_x$ for some object $x$, we order them arbitrarily in $\prec_{H_1}$.
    We construct linearization $H_2$ of $H^2$ by defining the order of operations $\prec_2$ in a similar fashion.
  
    By construction all invocations and responses appearing in $H^?$ appear both in $H_1$ and in $H_2$,
    and $H_1$ and $H_2$ preserve the partial order $\prec_{H^?}$. Therefore, $H_1$ and $H_2$ are linearizations
    of $H^?$.
    
    Consider some read $R$ on some object $x \in \mathcal{X}$ that returns in $H$. As $H|_x$ is IVL
    $\text{ret}(R,  \tau_{\mathcal{H}_x}(H_1^x) \leq \text{ret}(R, H|_x) \leq \text{ret}(R, \tau_{\mathcal{H}_x}(H_2^x))$.
    Note that $\text{ret}(R, H|_x) = \text{ret}(R, H)$. Furthermore, $\text{ret}(R, \tau_{\mathcal{H}_x}(H_i^x))= \text{ret}(R, \tau_{\mathcal{H}}(H_i))$ for $i \in \{1,2\}$,
    as objects other than $x$ do not affect the return value of this operation. Therefore $H$ is IVL.
  \end{proof}

\end{document}